\long\def\red#1\par{{\color{red}#1}}
\long\def\ble#1\par{{\color{blue}#1}}
\def\ie{{\it i.e.,}~}
\def\eg{{\it e.g.,}~}
\def\etal{{\it et al.}~}
\renewcommand\Pr{\mathbb{P}}
\definecolor{orange}{rgb}{1,0.5,0}
\def\dynamicqs{\textsc{Quicksort}}
\def\hrow{@{\hspace{5pt}}}
\def\tpi{\tilde\pi}
\DeclareMathOperator*{\E}{\mathbb{E}}
\DeclareMathOperator*{\pr}{\mathbb{P}}
\algnewcommand{\algorithmiccall}{\textbf{call}~}
\algnewcommand{\algorithmicto}{\textbf{to}~}
\newtheorem{lemma}{Lemma}[section]
\newtheorem{theorem}{Theorem}[section]
\newtheorem{claim}{Claim}[section]
\newtheorem{definition}{Definition}[section]
\theoremstyle{definition}
\newtheorem{remark}{Remark}[section]
\begin{document}

\author[1]{Varun Kanade\thanks{\texttt{varunk@cs.ox.ac.uk}}}
\author[2]{Nikos Leonardos\thanks{\texttt{nikos.leonardos@gmail.com}}}
\author[3]{Fr\'ed\'eric Magniez\thanks{\texttt{frederic.magniez@cnrs.fr}}}
\affil[1]{University of Oxford}
\affil[2]{University of Athens}
\affil[3]{CNRS and IRIF, Univ Paris Diderot, Sorbonne Paris-Cit\'e, France}
\title{Stable Matching with Evolving Preferences\thanks{%
		Partially supported by the French ANR Blanc project ANR-12-BS02-005
		(RDAM) and the European Commission IST STREP project Quantum Algorithms
		(QALGO) 600700.  Most of the work was carried out when V.K and N. L. were
		at LIAFA (now IRIF).  During this time V.K. was supported by the
		Fondation Sciences Mathématiques de Paris (FSMP).  }}
\date{}
\maketitle

\begin{abstract}
  We consider the problem of stable matching with dynamic preference lists. At
  each time-step, the preference list of some player may change by swapping
  random adjacent members. The goal of a central agency (algorithm) is to
  maintain an approximately stable matching, in terms of number of blocking
  pairs, at all time-steps. The changes in the preference lists are not
  reported to the algorithm, but must instead be probed explicitly. We design
  an algorithm that in expectation and with high probability maintains a
  matching that has at most $O((\log n)^2)$ blocking pairs.
\end{abstract}

\section{Introduction}

In the world of massive and distributed data, it is hardly reasonable to assume
that data are static. Yet, one must design algorithms that maintain a solution
for a given problem that is (approximately) consistent with the requirements,
\eg a permutation that is almost sorted. Thus, it is important to design
algorithms and data structures that are robust to changes in their input, \ie
they produce an output with some performance guarantee (quickly).

There are a few different dynamic data models that have been considered. The
area of dynamic graph algorithms consists of maintaining some property or
structure, such as connectivity, matchings, or spanning trees, even when the
underlying graphs are changing~\cite{EGI:1999,OR:2010,NS:2013,GP:2013}. Here, it
is assumed that the changes to the graph may be \emph{arbitrary}, but are
reported to the algorithm; and the focus is on designing data structures and
algorithms that adapt efficiently (typically in terms of computational time) to changes in
the input. The area of streaming algorithms studies the setting where the data
can only be accessed as a \emph{stream} and the focus is on producing the
desired output with highly space-efficient procedures (typically
poly-logarithmic in the size of the input). In the area of online algorithms,
one must design procedures that, even when data is revealed bit by bit, produce
an output that is \emph{competitive} with algorithms that see the entire input
at once.

Recently, Anagnostopoulos \etal~\cite{akmu2011} proposed the \emph{evolving
data model} to take into account the dynamic aspects of massive data. In this
model, the changes to the data are not revealed to the algorithm; instead, an
algorithm has query access to the data.  However, it is assumed that the
changes to the data are \emph{stochastic}, not adversarial. In this setting,
the focus is not on computational complexity (which is allowed to be polynomial
at each time-step), but query complexity, the number of probes made by the
algorithm. Anagnostopoulos \etal~\cite{akmu2011} studied the problem of
maintaining the \emph{minimum} element of a permutation and an approximately
sorted permutation, motivated by questions such as maintaining high
(page)-ranked pages. In their setup, a permutation evolves by choosing a random
element and swapping it with an adjacent element. In later work,
Anagnostopoulos~\etal~\cite{AKMUV12} studied evolving graph models and problems
such as $s$-$t$ connectivity and minimum spanning tree. 

In this work, our aim is to bring this notion to game theory starting from the
basic problem of computing a stable matching. In other words, we introduce the
notion of evolving agents, who may not report any updates to their strategy (or
preferences) without an explicit request. In the stable matching problem,  the
input consists of two sets $A, B$ of equal size, and for each member a total
order (preference) over members in the other set. Given a matching between $A$
and $B$, a pair $(x, y)$ with $x \in A$ and $y \in B$ is blocking if they prefer
each other to their matches. A matching is stable if there are no blocking
pairs. Gale and Shapley showed that a stable matching always exists and can be
found by an efficient algorithm~\cite{GS:1962}. We consider the setting where
the preference lists \emph{evolve} over time. The preference lists can evolve
over time, by swapping adjacent elements. More precisely, while the algorithm
can perform one query per time-step, we assume that a total number of $\alpha$
swaps events also occur, where $\alpha=\Theta(1)$ is some fixed parameter,
called the \emph{evolution rate}.  This assumption is similar to previous works
and models the critical regime: with less evolution events the input is
basically static, and with more the input evolves too fast. The goal is to
maintain a matching that has \emph{few} blocking pairs.

We summarize our results as follows. All three statements hold in expectation
and with high probability.

\begin{enumerate}[label=(\roman*)]
	\item Using the results of Anagnostopoulos \etal~\cite{akmu2011} for sorting
		permutations, we design an algorithm that maintains a matching with at most
		$O(n\log n)$ blocking pairs, at all time-steps after roughly the first $n^2
		\log n$ steps (Theorem~\ref{thm.matchsimple}). Also, we observe that any analysis 
		that uses their method as a black-box, cannot improve on this bound
		(Remark~\ref{remark}).\footnote{We don't rule out the possibility that a
			more fine-grained analysis of the algorithm may give better bounds;
			instead we design new algorithms.}
	\item In a restricted setting, where only one side, say the $B$ side, has
		evolving preference lists, and if the $A$ side has uniform random
		permutations as preference lists (known to the central agency), we design an
		algorithm that maintains a matching with $O(\log n)$ blocking pairs at all
		time-steps after roughly the first $n \log n$ steps
		(Theorem~\ref{thm.onesided}).
	\item Finally, we design an algorithm in the general setting, that maintains
		a matching with at most $O((\log n)^2)$ blocking pairs at all time-steps after roughly
		the first $n^2\log n$ steps (Theorem~\ref{thm.improved}).
\end{enumerate}

\section{Preliminaries}

In the rest of the paper, $n\geq 1$ denotes an integer parameter and $[n]$ the
set of integers $\{1,2,\ldots,n\}$.  For a non-negative random variable $X$,
parametrized by some integer $n$, we write ``$X=O(f(n))$ \emph{in expectation
and with high probability}'' when for any constant $c$ there exist constants
$n_0,c',c''>0$ such that $\E[X]\leq c'f(n)$ and $\Pr(X>c''f(n))\leq n^{-c}$, for
every integer $n\ge n_0$. 
	
\subsection{Stable Matching} 

We only consider the bipartite stable matching problem, also known as stable
marriage. There are two sets of players $A$ and $B$, with $|A|=|B|=n$. Each
player $x\in A$ ($y\in B$) holds a permutation of $B$ ($A$), denoted
$\pi_x:B\to[n]$ ($\pi_y:A\to[n]$) indicating their preferences over players in
the set $B$ ($A$). Thus, for $y \in B$, $\pi_x(y)$ denotes the rank of $y$ in
$x$'s preference list (where $1$ is the highest rank).

Let $M : A \rightarrow B$ be a matching (a bijection). 
A pair $(x, y)$ is said to be \emph{blocking} if $y \prec_{\pi_x} M(x)$ and
$x \prec_{\pi_y} M^{-1}(y)$, where $z\prec_{\pi}z'$ indicates that $z$ is
ranked higher than $z'$ according to permutation $\pi$ (\ie $\pi(z)<\pi(z')$). 
In words, $x$ prefers $y$ to $M(x)$ and $y$ prefers $x$ to $M^{-1}(y)$. 

A matching $M$ is \emph{stable} if there are no blocking pairs.  Then the
\emph{stable matching problem} is to find a stable matching given preference
lists $\{ \pi_z:z\in A\cup B)\}$.  Gale and Shapley~\cite{GS:1962} proved that a
stable matching always exists, and gave an algorithm that given the preferences
lists as input finds a stable matching in $O(n^2)$ time.

The Gale-Shapley algorithm is simple to describe. Only players in the set $A$
make proposals. Initially all players are \emph{unmatched}. Let $M$ denote a
partial matching at some point. If there is an unmatched player $x
\in A$, $x$ makes a proposal to $y \in B$, where $y$ is the highest-ranked
player in $\pi_x$ to whom $x$ has not yet proposed.  If $y$ is unmatched, or
prefers $x$ to $M^{-1}(y)$, then $y$ accepts the proposal and we set $M(x) = y$.
In the latter case, the agent previously matched to $y$, \ie $M^{-1}(y)$ before
$M$ was updated, becomes unmatched once more.  Gale and Shapley showed that this
algorithm always results in a stable matching. 

Wilson~\cite{Wilson:1972} studied the problem
where all the preference lists are independent and uniformly random
permutations; in this case, he showed that the number of proposals made by the
Gale-Shapley algorithm is $O(n \log n)$ in expectation and with high
probability (see also \cite{mr1995}). In fact, only the proposing side needs
to be random in their statement. We provide a proof sketch for completeness.

\begin{theorem}[\cite{Wilson:1972}]\label{thm:proposalbound}
  If the permutations $\{\pi_x:x \in A\}$ are chosen randomly, the number
  of proposals made in the Gale-Shapley algorithm (where only $A$ makes
  proposals) is $O(n\log n)$ in expectation and with high probability.
\end{theorem}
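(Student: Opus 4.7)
The plan is to prove the bound via the principle of deferred decisions combined with a coupling to the classical coupon-collector process. First I would observe the standard monotone termination fact: once a player $y\in B$ receives her first proposal, she remains matched for the rest of the execution (her partner may change, but she is never unmatched again). Hence the Gale--Shapley procedure terminates as soon as every $y\in B$ has received at least one proposal, since at that point $|B|=n$ members of $B$ are matched, forcing every $x\in A$ to be matched as well.

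Next I would invoke deferred decisions on the random permutations $\{\pi_x : x \in A\}$. Instead of sampling each $\pi_x$ up front, I generate $\pi_x$ lazily: each time $x$ is called upon to propose, I sample the next entry of $\pi_x$ uniformly at random from the set of members of $B$ that $x$ has not yet proposed to. This is distributed identically to first sampling $\pi_x$ uniformly and then running the algorithm, so the number of proposals is unchanged in distribution.

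To remove the ``without replacement'' complication, I would couple this process with a lazy variant that samples each proposal target uniformly at random from all of $B$, \emph{with replacement}; whenever this lazy variant proposes to somebody $x$ has already proposed to, we simply count that as a wasted proposal and redraw. This coupling shows that the number of proposals in the real algorithm is stochastically dominated by the number of draws needed in an i.i.d.\ uniform process until every element of $B$ has been drawn at least once. That is exactly the coupon-collector problem on $n$ coupons, whose completion time $T$ satisfies $\E[T]=n H_n = O(n\log n)$ and $\Pr(T > c\, n\log n) \le n^{-(c-1)}$ for any constant $c>1$.

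The only slightly delicate point is justifying the with-replacement coupling: one must check that, conditioned on the history, the target of the next proposal by $x$ in the deferred-decision process is uniform over members of $B$ not yet proposed to by $x$, not uniform over all of $B$. The standard workaround, which I would use, is to argue that the extra freedom of letting $x$ propose to somebody it has already proposed to can only increase the number of proposals, since those extra proposals are either refused immediately (and thus wasted) or are accepted under exactly the same conditions as in the real algorithm. This preserves the coupon-collector upper bound, yielding $O(n\log n)$ proposals in expectation and with high probability as desired.
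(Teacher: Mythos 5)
Your proposal is correct and follows essentially the same route as the paper's proof sketch: couple the real algorithm with the ``with replacement'' (amnesiac) variant whose proposals are uniform over all of $B$, argue stochastic domination, and then invoke deferred decisions to reduce the count to the coupon-collector bound of $O(n\log n)$ in expectation and with high probability. No substantive differences to report.
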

\begin{proof}[Proof sketch]
  Following the proof in \cite{mr1995} (see also \cite{knuth1997}), analyze an
  alternative procedure where every proposal is uniform over the whole of $B$.
  If it happens that $x\in A$ proposes to a $y\in B$ that has already rejected
  $x$, then a rejection is guaranteed.  It is not hard to show that the number
  of proposals such an algorithm makes stochastically dominates the number of
  proposals of the classical algorithm.  Next, by the method of deferred
  decisions, fix the randomness in the algorithm only when needed. Then observe
  that the number of proposals is equal to the number of coupons chosen in the
  coupon-collector's problem.
\end{proof}

\subsection{Model for evolving input}

A general framework for studying dynamic data was introduced in
\cite{akmu2011}. Here we are only concerned with evolving preference lists (or
permutations).  In our model, we consider discrete \emph{time-steps}. In each
time-step, the algorithm is competing against \emph{nature} as follows: the
algorithm can \emph{query} the input locally, \emph{nature} lets the input
evolve according to one or more \emph{evolution events}.

A query to the stable matching input is a triplet $(z,u,v)\in (A\times B^2)\cup
(B\times A^2)$ and the answer is whether $\pi_z(u)<\pi_z(v)$. One evolution
event consists of the following: pick $z\in A\cup B$ and $i\in[n-1]$ uniformly
at random and swap $u=\pi^{-1}_z(i)$ and $v=\pi^{-1}_z(i+1)$ (\ie set
$\pi_z(u)=i+1$ and $\pi_z(v)=i$).

While the algorithm can perform one query per time-step, $\alpha$ evolution
events also occur, where $\alpha\geq 1$ is some integer called the
\emph{evolution rate}.  We further assume that $\alpha=\Theta(1)$, meaning that
evolution events occur basically as often than the algorithm probes. We
emphasize that the rate-limiting factor is the queries made by the algorithm.
In particular, the algorithm may perform arbitrary (polynomial-time)
computations in between time-steps. We are now ready to define our problem: 
\begin{quote}
\textbf{Evolving Stable Matching} (ESM): Given query access to an instance of
the stable matching problem with evolution rate $\alpha=\Theta(1)$, maintain a
matching that minimizes the number of blocking pairs.
\end{quote}

\subsection{Sorting evolving permutations}

The problem of sorting a single evolving permutation has been already addressed
in~\cite{akmu2011}.  In this context, the evolution rate is still constant, but
denotes the evolution rate of this single permutation.  We will use the
algorithm \dynamicqs{} of~\cite{akmu2011}.  It is simply the randomized version
of quicksort which is shown to be robust with respect to an evolving input. The
first lemma shows that the running time of quicksort is not affected by
evolution events.\footnote{We remark that Anagnostopoulos \etal~\cite{akmu2011}
use `whp' to denote events that hold with probability $1 - o(1)$, rather than the
stronger notion we use in this paper. However, their proofs for the results
used in our paper actually prove the stronger bounds. They have other results
that do not satisfy the stronger notion and these are not used in our work.}

\begin{lemma}[Proposition~3 in \cite{akmu2011}]\label{lem.quicksort}
  The running time of \dynamicqs{} is $O(n\log n)$ in expectation and with
  high probability, for any rate of evolution when the pairs to be swapped are
	chosen randomly.
\end{lemma}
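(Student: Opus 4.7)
The plan is to follow the structure of the standard analysis of randomized quicksort, identifying the two ingredients that carry over to the evolving setting: a recursion-depth bound and a level-by-level work bound. Let $T(n)$ denote the running time of \dynamicqs{} on a subarray of size $n$. Since each time-step corresponds to one comparison and each recursive call on size $k$ performs exactly $k-1$ comparisons to partition, the total work summed over all nodes at a single depth of the recursion tree is at most $n$. Hence, if the recursion depth is $D$, then the overall running time is at most $nD$. It therefore suffices to prove that $D = O(\log n)$ in expectation and with high probability.

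For the depth bound, I would track a single element $e \in [n]$ through the recursion tree and bound the number of ancestors of the leaf that contains $e$. At each ancestor call, the algorithm chooses a pivot $p$ uniformly at random from the current subarray $S$ containing $e$, independently of everything else, and then issues the $|S|-1$ comparisons of $p$ against the remaining elements at consecutive time-steps. The subarray containing $e$ after the partition is one of the two sides of the split. The key quantitative claim is: the probability (over the choice of pivot and the evolution events during the partition) that $e$ lands in a subarray of size at most $(3/4)|S|$ is bounded below by some constant $c > 0$. Chaining this Bernoulli-type event across recursion levels and applying a Chernoff bound then yields $D = O(\log n)$ with high probability.

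The main obstacle is proving the $(3/4)$-balanced-split claim in the presence of evolution. In the static case the claim is immediate because the pivot is a uniformly random rank in $S$. Here, during the $|S|-1$ comparisons, nature performs $O(|S|)$ adjacent transpositions, each of which alters the rank of any given element by at most $1$ at each query only when it directly involves that element. The plan is to show that the rank of the pivot and the rank of $e$ each drift by only $O(\sqrt{|S|})$ in absolute value during the partition (a standard random walk / martingale estimate on adjacent-transposition evolution), whereas the initial rank of the pivot is spread uniformly over $\{1,\dots,|S|\}$. Thus even after the drift, a constant-probability event keeps the pivot's rank in, say, the middle third of $S$ and keeps $e$'s rank from escaping to the opposite end, which together guarantee the balanced split.

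Once the recursion-depth statement is established, the expectation bound $\mathbb{E}[T(n)] = O(n \log n)$ follows by summing the expected work $O(n)$ per level over the expected $O(\log n)$ levels, and the high-probability bound follows from the concentration inequality used for $D$ together with the deterministic bound $T(n) \le n \cdot D$. This gives the statement of Lemma~\ref{lem.quicksort}, uniformly in the (constant) evolution rate since the whole argument tolerates a constant multiplicative increase in the number of evolution events per comparison.
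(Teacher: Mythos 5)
First, note that the paper itself does not prove this lemma: it is imported verbatim as Proposition~3 of \cite{akmu2011}, and the only property of it the paper ever uses is the constant-rate case $\alpha=\Theta(1)$ (indeed, inside \textsc{SequentialSort} the effective per-permutation rate is even smaller). So there is no in-paper argument to match your proposal against; it has to be judged on its own terms against the statement as quoted.

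Measured against that statement, there is a genuine gap: the lemma asserts the $O(n\log n)$ bound \emph{for any rate of evolution}, whereas your argument is intrinsically a constant-rate argument (you even say so in your closing sentence). Your balanced-split claim rests on there being only $O(|S|)$ evolution events during the $|S|-1$ comparisons of a partition and on a resulting $O(\sqrt{|S|})$ rank drift; at rate $\alpha$ the partition window contains $\Theta(\alpha|S|)$ events, and the quantity that actually controls the split --- the number of events that swap the pivot itself with another element, since each comparison outcome $u$ vs.\ $p$ flips only when that pair is directly transposed, so the left-part size is the pivot's initial rank in $S$ up to an additive error of this count --- scales like $\alpha|S|/n$. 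For $\alpha=O(1)$ this is $O(1)$ in expectation and your constant-probability $(3/4)$-shrinkage (hence $D=O(\log n)$, hence $T\le nD$) goes through, though note in passing that it is this flip count, not the drift of $e$'s own rank, that matters, and that your ``$O(\sqrt{|S|})$ martingale drift'' is a loose stand-in for a binomial count with mean $O(\alpha|S|/n)$. But once $\alpha\gg n$, that count exceeds $|S|$ and the argument as written collapses; the regime of very large $\alpha$ is exactly where the cited proposition still claims the bound and where a different argument (exploiting that rapidly re-randomized comparisons against the pivot are unbiased by symmetry, or whatever device \cite{akmu2011} actually uses) is required. In addition, the central shrinkage claim is only announced (``the plan is to show\dots'') rather than proved, and the per-level coupling to independent Bernoullis (success probability bounded below \emph{conditionally} on the history, using that the pivot choice is fresh randomness independent of the current permutation and that evolution is oblivious) should be stated explicitly before invoking Chernoff. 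As a proof of the lemma restricted to $\alpha=\Theta(1)$ --- which is all this paper needs --- your outline is essentially sound; as a proof of the lemma as stated, it is not.
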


Second, Lemma~6 in \cite{akmu2011} states that \dynamicqs\ when run on an
evolving permutation $\pi$, computes a permutation $\tpi$ in which every element
is approximately sorted.  At time-step $t$, let $\pi^t$ the denote the current
permutation, and $\tpi^t$ its approximation computed by the algorithm.

\begin{lemma}[Lemma~6 in \cite{akmu2011}]\label{lem.six}
	Let $t$ be the time-step of completion of \dynamicqs, then given an element $u$,
	the number of pairs $(u,v)$ that the permutations $\pi^t$ and $\tpi^t$ rank
	differently is $O(\log n)$ in expectation and with high probability.
\end{lemma}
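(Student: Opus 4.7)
The plan is to exploit the fact that the relative order of two fixed elements $u,v$ in a permutation $\pi$ can change only through a direct swap of $u$ and $v$ as adjacent neighbors. I will charge each $v$ on which $\pi^t$ and $\tilde\pi^t$ disagree about the order relative to $u$ to a distinct evolution event that involves $u$, and then bound the number of such events during the $O(n\log n)$ runtime of \dynamicqs{}.

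For every $v\neq u$, define a commit time $s_{uv}\le t$ as the time of the partition step in which \dynamicqs{} first separates $u$ and $v$: either one of them becomes a pivot and a direct comparison fixes their order, or some third pivot $p$ is chosen and the comparisons $u$ vs $p$ and $v$ vs $p$ place $u,v$ on opposite sides. The order of $u$ and $v$ in $\tilde\pi^t$ is the one committed at step $s_{uv}$, so if $\pi^t$ and $\tilde\pi^t$ disagree on $(u,v)$ the true order must have net-flipped during $[s_{uv},t]\subseteq[0,t]$, and this requires at least one evolution event that directly swaps $u$ and $v$. Distinct bad $v$ thus certify distinct $u$-involving evolution events, so the number of bad $v$ is at most the number of $u$-involving events in $[0,t]$. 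By Lemma~\ref{lem.quicksort}, $t=O(n\log n)$ in expectation and with high probability, so the total number $T$ of evolution events is $O(n\log n)$. Since each event picks $i\in[n-1]$ uniformly and swaps positions $i$ and $i+1$, it involves $u$ with conditional probability at most $2/(n-1)$, independent of the past. Linearity of expectation and a Chernoff bound then yield $O(\log n)$ $u$-involving events in expectation and with high probability, which upper-bounds the number of bad $v$.

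The main technical subtlety is the case in which $u$ and $v$ are separated by a third pivot $p$: the comparisons $u$ vs $p$ and $v$ vs $p$ happen at different moments during the same partition scan, so the order that \dynamicqs{} infers by transitivity need not coincide with any single snapshot of $\pi$. A disagreement on $(u,v)$ could then arise from $u$-$p$ or $v$-$p$ swaps during the partition step rather than from a direct $u$-$v$ swap, so the charging above must be refined — for instance by tracking the algorithm's implicit ordering carefully enough that every disagreement still witnesses a $u$-involving swap somewhere in $[0,t]$, or by folding in an $O(\log n)$ slack coming from the $O(\log n)$ partition steps in which $u$ participates in the recursion tree.
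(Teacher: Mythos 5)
There is no in-paper proof to compare against here: the statement is imported verbatim as Lemma~6 of \cite{akmu2011}, so your attempt has to stand on its own. As written it does not, and the place where it fails is precisely the pivot/transitivity case that you flag in your last paragraph but do not repair. Both the claim that a disagreement on $(u,v)$ forces a direct $u$--$v$ swap and the stronger claim that ``distinct bad $v$ certify distinct $u$-involving evolution events'' are false when $u$ and $v$ are separated by a third pivot. Concretely, let $p$ be the pivot of a partition step containing $u$ and $v_1,\dots,v_k$, and suppose that when $u$ is compared to $p$ the true order is $v_1\prec\dots\prec v_k\prec u\prec p$, so $u$ is placed on the ``before $p$'' side. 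Now let nature move $p$ downward by adjacent swaps: one swap of $p$ with $u$ (the only $u$-involving event) followed by $k$ swaps of $p$ with $v_k,\dots,v_1$, none involving $u$. If each $v_i$ is compared to $p$ after $p$ has passed it, the observed outcome is $p\prec v_i$, so the algorithm commits $u\prec v_i$ for every $i$, while the true order is $v_i\prec u$ throughout and no $u$--$v_i$ swap ever occurs. Thus a single $u$-involving event produces $k$ disagreeing pairs, the true order never ``net-flips,'' and bounding the number of $u$-involving events by $O(\log n)$ does not bound the number of bad $v$.

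Your suggested fallback, an $O(\log n)$ slack coming from the $O(\log n)$ partition steps in which $u$ participates, does not close the gap either, since (as above) one partition step can corrupt many pairs $(u,v)$ at once. What is actually needed is a second, different charging argument: pairs whose committed order is already wrong must be charged to swaps involving the \emph{current pivot} (or $v$) occurring during the partition steps on $u$'s root-to-leaf path in the recursion tree, and one must then show that the number of such swaps is $O(\log n)$ in expectation and with high probability (using that these windows have bounded total length and that each evolution event hits a given element with probability $O(1/n)$), in addition to the $O(\log n)$ direct $u$-swaps over the whole $O(n\log n)$-step run, which handle order changes occurring after commitment. Your outline contains the second ingredient but not the first, and the first is the substantive part of Lemma~6 in \cite{akmu2011}.
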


In our setting there are $2n$ evolving permutations over some set of $n$ elements.
Algorithm~\ref{alg.sequential} simply sorts $m$ (out of $2n$) permutations,
denoted by $\pi_1, \ldots, \pi_m$ using \dynamicqs\ one after another. (We
always invoke Algorithm~\ref{alg.sequential} with either $n$ or $2n$
permutations.)

\begin{algorithm}
  \caption{\label{alg.sequential}:~~Sequential sorting\strut}   
	\begin{algorithmic}[1]
		\Procedure{SequentialSort}{$\{\pi_j ~:~ j = 1, \ldots, m\}$} \Comment{Only
		have query access to input}
		\For{$j=1$ \algorithmicto $m$}
      \State $\tpi_j\gets\dynamicqs(\pi_j)$
    \EndFor
		\State \Return $\{\tilde{\pi}_j ~:~ j = 1, \ldots, m \}$
		\EndProcedure
  \end{algorithmic}
\end{algorithm}

Using Lemma~\ref{lem.six} (Lemma~6 of \cite{akmu2011}) we can argue that
Algorithm~\ref{alg.sequential} maintains all permutations approximately sorted.
While the evolving rate is still $\alpha=\Theta(1)$, there are now $2n$ evolving
permutations, and the total number of evolution events is $\alpha$ per time-step.

\begin{lemma}\label{lem.logn}
	Let $t$ be the time-step when Algorithm~\ref{alg.sequential} terminates. Then, for
	$m \leq 2n$, given any element $u$ and $j\in[m]$, the number of pairs $(u,v)$
	that the permutations $\pi_j^t$ and $\tpi_j^t$ rank differently is $O(\log n)$
	in expectation and with high probability.
\end{lemma}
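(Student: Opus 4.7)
The plan is to split the analysis for each permutation $\pi_j$ into two phases: the window during which \dynamicqs{} is actively sorting $\pi_j$, and the subsequent window in which $\pi_j$ keeps evolving while Algorithm~\ref{alg.sequential} processes $\pi_{j+1},\ldots,\pi_m$ but $\tpi_j$ stays frozen. The elementary observation I will use throughout is that a single adjacent-swap in $\pi_j$ flips the relative order of exactly one pair of elements, so the number of pairs $(u,v)$ involving a fixed element $u$ whose order gets flipped is at most the number of such swaps that actually involve $u$.

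First, by Lemma~\ref{lem.quicksort} and a union bound over the $m\leq 2n$ invocations of \dynamicqs{}, Algorithm~\ref{alg.sequential} terminates at some time-step $t=O(n^2\log n)$ in expectation and with high probability. Let $t_j$ denote the completion time of the $j$-th call. That call only queries $\pi_j$, which during its execution evolves at the reduced rate $\alpha/(2n)\leq\alpha$; since slowing down evolution cannot hurt the guarantee, Lemma~\ref{lem.six} applies and yields, for any fixed $u$, that the number of pairs $(u,v)$ on which $\pi_j^{t_j}$ and $\tpi_j^{t_j}$ disagree is $O(\log n)$ in expectation and with high probability.

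Next I would bound how much $\pi_j$ drifts between $t_j$ and $t$. Each of the $\alpha$ evolution events per time-step picks $\pi_j$ with probability $1/(2n)$ and, conditional on that, picks a uniformly random position $i\in[n-1]$, so it affects $u$ only when $i\in\{\pi_j(u)-1,\pi_j(u)\}$, which happens with probability at most $2/(n-1)$. Hence the per-step probability of a swap in $\pi_j$ touching $u$ is $O(1/n^2)$. Over a window of length $O(n^2\log n)$ the expected number of such swaps is $O(\log n)$, and since the events at distinct time-steps are independent, a Chernoff bound, applied on the high-probability event $t-t_j=O(n^2\log n)$ from Lemma~\ref{lem.quicksort}, gives an $O(\log n)$ bound with high probability.

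Finally, since $\tpi_j^t=\tpi_j^{t_j}$, the triangle-style inequality
\[
|\{v:\pi_j^t\text{ and }\tpi_j^{t_j}\text{ disagree on }(u,v)\}|\leq |\{v:\pi_j^{t_j}\text{ and }\tpi_j^{t_j}\text{ disagree on }(u,v)\}|+|\{v:\pi_j^{t_j}\text{ and }\pi_j^{t}\text{ disagree on }(u,v)\}|
\]
combines the two preceding estimates to give the claimed $O(\log n)$ bound. The main care needed is to keep the two high-probability estimates compatible with the random length $t-t_j$ and to union-bound over $j\in[m]$; I would handle this by conditioning on the high-probability event $t\leq cn^2\log n$ and absorbing all failure probabilities into a single final union bound.
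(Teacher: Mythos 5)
Your proposal is correct and follows essentially the same route as the paper: apply Lemma~\ref{lem.six} at the time-step when \dynamicqs{} finishes sorting $\pi_j$, then bound the additional $O(\log n)$ disagreements created by the evolution events that touch $u$ during the remaining $O(n^2\log n)$ steps (guaranteed by Lemma~\ref{lem.quicksort}), since each adjacent swap involving $u$ flips only one pair. The only cosmetic difference is the concentration tool — you use a direct Chernoff bound on the per-step probability $O(1/n^2)$, while the paper cites a balls-and-bins load bound — and your extra remarks (reduced effective evolution rate during sorting, handling the random window length by a deterministic cap plus union bound) are sound and consistent with what the paper does implicitly.
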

\begin{proof}
	Fix some $j\in[n]$. Suppose that $\tpi_j^t$ was computed at time-step $t'\le
	t$ (the time-step when \dynamicqs\ for this particular list terminates). By
	Lemma~\ref{lem.six} the statement holds for $u$ at time-step $t'$. Due to
	Lemma~\ref{lem.quicksort} we have $t-t'=O(n^2\log n)$ with high probability.
	During this time, the number of evolution steps that have swapped $u$ with
	an adjacent element is $O(\log n)$ with high probability. This follows from
	a balls-and-bins experiment where we throw $O(n^2\log n)$ balls
	(corresponding to the evolution steps) into $m (n-1)$ bins (corresponding to
	the adjacent pairs). It is known (see Exercise 3.1 in \cite{mr1995}) that in
	this particular case the number of balls in every bin is of the order of its
	mean with high probability. Therefore, during this time, at most $O(\log n)$
	more elements may be swapped with $u$.
\end{proof}

\subsection{Chernoff Bound with dependent variables}
We will require the following extension of the Chernoff bound. 
It follows from (the more general) Theorem~3.8 in~\cite{mcdiarmid1998}.

\begin{theorem}\label{thm:extended-chernoff}
For $i\in[n]$, let $Y_i$ be a random variable over some set $\mathcal{Y}_i$ and
$X_i$ be a Boolean random variable.
For any $y\in\prod_{i=1}^n\mathcal{Y}_i$ and $k\in[n]$, let $E_k(y)$ denote the event
$Y_1=y_1,\dots,Y_k=y_k$. 
Suppose $\pr[X_k=1|E_{k-1}(y)]\le p$, for all $y$ and $k$ as above.
Then, for any $t\ge0$,
\[\pr\Bigl[\sum X_k\ge pn+t\Bigr]\le\exp\Bigl({-\frac{3t^2}{6pn+2t}}\Bigr).\]
\end{theorem}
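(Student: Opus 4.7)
The plan is to recognise Theorem~2.1 as a direct instance of the Bernstein-type concentration inequality for supermartingales that is stated as Theorem~3.8 of~\cite{mcdiarmid1998}, and to verify its hypotheses. No new idea is needed; the only work is to cast the data of the statement in the supermartingale form demanded by McDiarmid.

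First I would set up the filtration $\mathcal{F}_k = \sigma(Y_1,\dots,Y_k)$ (enlarged by $X_1,\dots,X_k$ if $X_k$ is not automatically a function of the $Y_i$'s; in every intended application in the paper, the $X_k$'s will in fact be $\sigma(Y_1,\dots,Y_k)$-measurable, so this causes no loss). Set $D_k = X_k - p$ and $Z_k = \sum_{i\le k} D_i$. The hypothesis $\Pr[X_k=1 \mid E_{k-1}(y)]\le p$ rewrites as $\E[D_k \mid \mathcal{F}_{k-1}] \le 0$ almost surely, so $\{Z_k\}$ is a supermartingale whose increments take values in $[-p,\,1-p] \subseteq [-1,1]$. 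Since each $X_k$ is Boolean,
\[
\var[D_k \mid \mathcal{F}_{k-1}] \;=\; \E[X_k\mid\mathcal{F}_{k-1}]\bigl(1-\E[X_k\mid\mathcal{F}_{k-1}]\bigr) \;\le\; p,
\]
hence the accumulated conditional variance satisfies $\sum_{k=1}^n \var[D_k\mid\mathcal{F}_{k-1}] \le pn$ almost surely.

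Second I would invoke McDiarmid's Theorem~3.8 in its standard Bernstein form: for a supermartingale with increments bounded by $1$ and accumulated conditional variance at most $V$, one has $\Pr[Z_n \ge t] \le \exp\bigl(-t^2/(2V + 2t/3)\bigr)$. Plugging in $V = pn$ and simplifying $t^2/(2pn + 2t/3) = 3t^2/(6pn + 2t)$ yields the bound claimed in the theorem, and the event $\{\sum_k X_k \ge pn + t\}$ is exactly $\{Z_n \ge t\}$.

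The only minor subtlety — certainly not a real obstacle — is the gap between the hypothesis, which conditions only on past $Y$'s, and a supermartingale argument, which naturally conditions on the entire past. This gap is vacuous whenever $X_k$ is determined by $Y_1,\dots,Y_k$, which covers every use of the theorem in the sequel; in the general case one simply enlarges the filtration as above and assumes the conditional bound survives, which is the sense in which \cite{mcdiarmid1998} states the more general result.
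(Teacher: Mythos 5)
Your proposal is correct and matches the paper's approach: the paper gives no argument beyond the citation ``it follows from Theorem~3.8 in McDiarmid (1998)'', and you simply flesh out the verification of that theorem's hypotheses (bounded increments, conditional variance at most $p$ per step, Bernstein-form bound $\exp(-t^2/(2pn+2t/3))=\exp(-3t^2/(6pn+2t))$), which is exactly what the authors implicitly rely on. Your flagged subtlety about $X_k$ needing to be determined by $Y_1,\dots,Y_k$ is a fair observation about the statement itself (and holds in all of the paper's applications); the only cosmetic improvement would be to center each increment at $\E[X_k\mid\mathcal{F}_{k-1}]$ rather than at $p$, so that one applies the cited inequality to a genuine martingale difference sequence and then uses $\sum_k\E[X_k\mid\mathcal{F}_{k-1}]\le pn$.
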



\section{Two simple cases}

In this section we present two simple arguments in two different settings.
First, we consider how the original Gale-Shapley algorithm performs when run,
without any modification, on lists produced by running quicksort on the evolving
input. We present a simple analysis showing a bound of $O(n\log n)$ on the
number of blocking pairs, which is better than the trivial bound of order $n^2$.
Next, we analyze the Gale-Shapley algorithm when evolution events only occur on
one side and the preference lists are uniformly random permutations; in
particular the preference lists on the $A$ side are chosen uniformly at random,
and the preference lists on the $B$-side are subject to evolution
events.\footnote{
	Note that after sufficiently many time-steps (though still polynomial) the evolution
	events ensure that all permutations are uniformly random. This follows from
	analyzing the mixing time of the corresponding Markov chain over permutations.
	See for example the book~\cite{LPW:2009}.} 
We present a simple analysis showing an $O(\log n)$ bound on the number of
blocking pairs for this special case.

\subsection{A simple algorithm}

Our first algorithm ignores evolution of preference lists and runs the standard
Gale-Shapley algorithm to produce a matching. More specifically, it first
obtains the preferences lists for all $2n$ agents using the \dynamicqs\
algorithm of~\cite{akmu2011} (\ie using Algorithm~\ref{alg.sequential}) and then
produces a matching by running the Gale-Shapley algorithm on these lists
(ignoring the fact that these lists are only \emph{approximately} correct). 

We show that this simple algorithm maintains a matching with at most $O(n\log
n)$ blocking pairs. Note that the number of blocking pairs is trivially at most
$n^2$. We further argue that improving the bound would require new ideas that
either go around Lemma~\ref{lem.logn} (Lemma~6 of \cite{akmu2011}) or improve
the analysis in a substantial way.

Algorithm~\ref{alg.matchsimple} runs in perpetuity. The matching $M$ is
maintained as the output until the new matching based on the newly sorted
preference lists can be computed.

\begin{algorithm}
  \caption{\label{alg.matchsimple}:~~Simple dynamic stable matching\strut}
  \begin{algorithmic}[1]
    \While{\textsc{True}}
		\State $\{\tilde{\pi}_z~:~z \in A \cup B\} \leftarrow
		\textsc{SequentialSort}(\{\pi_z~:~z \in A \cup B \})$
		\Comment{Calling Algorithm~\ref{alg.sequential}}
      \State\Return  Gale-Shapley matching $M$ on the (approximately) sorted lists
        $\{\tilde\pi_z : z \in A\cup B\}$
    \EndWhile
  \end{algorithmic}
\end{algorithm}

\begin{theorem}\label{thm.matchsimple}
	For a sufficiently large constant $c_0$ and any time-step $t \geq c_0 n^2
	\log n$, Algorithm~\ref{alg.matchsimple} maintains a matching with $O(n\log
	n)$ blocking pairs in expectation and with high probability. \\
\end{theorem}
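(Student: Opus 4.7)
The plan is to show that the Gale--Shapley matching $M$ output by the algorithm, although computed on the approximately sorted lists $\tpi$, cannot have too many blocking pairs under the true current lists $\pi^t$, because every such blocking pair must be ``witnessed'' by a local disagreement between $\tpi$ and $\pi^t$ that can be controlled by Lemma~\ref{lem.logn} together with a concentration bound on additional evolution events.

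Let $t_k$ denote the time-steps at which successive iterations of the while-loop finish, and let $M_k$ be the matching computed at $t_k$ from $\tpi^{t_k}$. By Lemma~\ref{lem.quicksort} applied to each of the $2n$ permutations inside \textsc{SequentialSort}, we have $t_{k+1}-t_k=O(n^2\log n)$ in expectation and with high probability; in particular after $c_0 n^2\log n$ time-steps some matching has indeed been output. Fix $t\in[t_k,t_{k+1})$. The key observation is that for any blocking pair $(x,y)$ of $M_k$ with respect to $\pi^t$, stability of $M_k$ under $\tpi^{t_k}$ forces either $\tpi_x^{t_k}$ and $\pi_x^t$ to rank $\{y,M_k(x)\}$ differently, or $\tpi_y^{t_k}$ and $\pi_y^t$ to rank $\{x,M_k^{-1}(y)\}$ differently.

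The main technical step is then to prove that, with high probability, for every agent $z\in A\cup B$ and every element $u$, the number of elements $v$ on which $\tpi_z^{t_k}$ and $\pi_z^t$ disagree is $O(\log n)$. This disagreement has two sources: at time $t_k$ itself Lemma~\ref{lem.logn} bounds it by $O(\log n)$; and during $[t_k,t]$ up to $O(n^2\log n)$ further evolution events occur, each landing on any particular pair $(z,u)$ with probability $O(1/n^2)$. Theorem~\ref{thm:extended-chernoff}, or the balls-and-bins argument already used in the proof of Lemma~\ref{lem.logn}, then yields at most $O(\log n)$ additional swaps touching $u$ in $\pi_z$ with high probability, so the combined per-element disagreement stays $O(\log n)$.

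Charging each blocking pair $(x,y)$ either to $x$ (taking $u=M_k(x)$ in $\pi_x$) or to $y$ (taking $u=M_k^{-1}(y)$ in $\pi_y$), the per-element bound above yields at most $O(\log n)$ blocking pairs per agent, and hence $O(n\log n)$ in total. A union bound over the $O(n^2)$ pairs $(z,u)$ preserves the high-probability claim, and the expectation bound follows from combining the trivial worst-case bound of $n^2$ blocking pairs with the $n^{-c}$ failure probability for any desired $c$. The main obstacle is precisely controlling the second source of disagreement: Lemma~\ref{lem.logn} only speaks at the instant of termination, so one has to essentially redo its proof idea over the extra $O(n^2\log n)$-step window during which $M_k$ continues to be used as the output.
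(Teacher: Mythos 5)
Your proof is correct and follows essentially the same route as the paper: charge each blocking pair, via stability of the Gale--Shapley output under $\tilde\pi$, to a disagreement between $\tilde\pi$ and the true list involving the agent's match, bound these per agent by Lemma~\ref{lem.logn}, and handle the $O(n^2\log n)$-step window until the next matching by a Chernoff/balls-and-bins bound on further evolution events. The only (immaterial) difference is bookkeeping: the paper bounds blocking pairs at the computation time $T$ and then separately counts new blocking pairs created during the window at rate $\alpha/(n-1)$ per step, whereas you fold the window's swaps into the per-element disagreement count at time $t$ before charging; both yield the same $O(n\log n)$ bound.
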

\begin{proof}
	We consider the number of blocking pairs at time-step $T\le t$ when the current
	matching $M$ was computed. In the following discussion, we use the following
	notation: for $x \in A, y \in B$, if $M(x) = y$, then $M(y) = x$ (rather than
	$M^{-1}$). At time-step $T$, for each $z\in A\cup B$, define the indicator function
	$I_z(w)$ to be 1 when $M(z)\prec_{\tilde{\pi}_z} w$ and $w \prec_{\pi_z} M(z)$
	and 0 otherwise. (We don't explicitly use superscripts on the preference lists
	$\pi$ as time-step $T$ is fixed until specified otherwise.) By
	Lemma~\ref{lem.logn}, in expectation and with high probability, 
  \begin{equation}\label{eq.1}
		\sum_wI_z(w)=\bigl|\{w:w \prec_{\pi_z} M(z)\hbox{ and } M(z)
		\prec_{\tilde{\pi}_z} w\}\bigr|=O(\log n),
  \end{equation}
	If a pair $(x,y)$ is blocking at time-step $T$, then $x \prec_{\pi_y}M(y)$ and $y
	\prec_{\pi_x} M(x)$. Assume $x\prec_{\tilde{\pi}_y} M(y)$ and $y
	\prec_{\tilde{\pi}_x} M(x)$. Since $y \prec_{\tilde{\pi}_x} M(x)$, $x$ must
	have proposed to $y$ at some point during the execution of the Gale-Shapley
	algorithm. By the properties of the Gale-Shapley algorithm, $y$ should have
	been matched to an element of $A$ with rank according to $\tilde{\pi}_y$ at
	least as high as the rank of $x$ in $\tilde{\pi}_y$. This contradicts $x
	\prec_{\tilde{\pi}_y} M(y)$. It follows that either $M(y)\prec_{\tilde{\pi}_y}
	x$ or $M(x)\prec_{\tilde{\pi}_x} y$.  Define $U(x,y)$ to be 1 when $(x,y)$ is
	blocking and 0 otherwise.  We have argued that
  \[U(x,y)\le I_x(y)+I_y(x).\]
  By the union bound, Equation~\ref{eq.1} holds for every $z\in A\cup B$
  with high probability. Summing over all pairs $(x,y)$ and applying the
  union bound again
  \[\sum_{x,y}U(x,y)\le\sum_{x,y}I_x(y)+I_y(x)
      =\sum_x\Bigl(\sum_yI_x(y)\Bigr)+\sum_y\Bigl(\sum_xI_y(x)\Bigr)
      =O(n \log n),\]
  in expectation and with high probability.

  Next, let $t^\prime = t - T$. We need to account for blocking pairs that may
  have arisen during these $t^\prime$ time-steps. First, we observe that by
  Lemma~\ref{lem.quicksort} and union bound, with high probability $t^\prime =
  O(n^2 \log n)$ (as otherwise another matching more up-to-date than the one at
  time-step $T$ would be available). During the $t^\prime$ time-steps from $T$
  to $t$, evolution may create a blocking pair only if the swap decreases the
  rank of $M(z)$ in $\pi_z$, for some $z\in A\cup B$. Therefore, each step of
  the evolution introduces---independently---a new blocking pair with
  probability at most $\alpha/(n-1)$. The expected number of blocking pairs
  introduced is therefore at most $O(n\log n)$ for $\alpha = \Theta(1)$ (and
  assuming $t^\prime = O(n^2 \log n)$). The result follows by a simple
  application of the Chernoff and union bounds.
\end{proof}

\begin{remark}\label{remark}
We present a pair of instances to the stable matching problem with preference
lists $\pi_z$ and $\tilde\pi_z$ for $z \in A \cup B$, for which the conclusion
of Lemma~\ref{lem.logn} is satisfied, \ie for any $z$ the number of pairs $(i,
j)$ that are ordered differently in $\pi_z$ and $\tilde\pi_z$ are $O(\log n)$.
We then show that a matching that is stable with respect to $\tilde\pi$ as
preference lists has $\Omega(n \log n)$ blocking pairs with respect to $\pi$.
Thus, it follows that using the \dynamicqs\ algorithm of Anagnostopoulos
\etal\cite{akmu2011} and its analysis as a blackbox will not result in a
stronger result than the one provided in Theorem~\ref{thm.matchsimple}. 

First, we define the preference lists $\tilde{\pi}_z$ for $z \in A \cup B$. Let
$A = B = [n]$. Then for $x \in A$, the preference list (ranking) $\tilde\pi_x$
is defined as $x, x + 1, \ldots, n, 1, 2, \ldots, x - 1$. On the other hand for
$y \in B$, the preference list (ranking) $\tilde\pi_y$ is defined as $y, y - 1,
\ldots, 1, n, n - 1, \ldots, y + 1$. The rankings $\pi_z$, $z \in A \cup B$, are
now defined as follows: let $k$ be some parameter, $\pi_z$ simply as the
elements at rank $1$ and $k$ swapped.  Figure~\ref{fig:example} shows an example
with $n = 7$ and $k = 3$.  Clearly, when $k = \Theta(\log n)$, $\pi_z$ and
$\tilde\pi_z$ satisfy the conclusion of Lemma~\ref{lem.logn}. Yet, it is easy to
see that $M(i) = i$ is a stable matching for the preference lists $\tilde\pi_z$,
$z \in A \cup B$, and for this matching with respect to the preference lists
$\pi_z$, $z \in A \cup B$, every pair $(i, j)$ with $0 < j - i < k$ is a
blocking pair.
\end{remark} 

\begin{figure}
\[
\begin{array}{|c||c\hrow c\hrow c\hrow c\hrow c\hrow c\hrow c|}
	\multicolumn{8}{c}{\mbox{$A$-side lists } \pi}\\\hline
  1&3&2&1&4&5&6&7\\[-3pt]
  2&4&3&2&5&6&7&1\\[-3pt]
  3&5&4&3&6&7&1&2\\[-3pt]
  4&6&5&4&7&1&2&3\\[-3pt]
  5&7&6&5&1&2&3&4\\[-3pt]
  6&1&7&6&2&3&4&5\\[-3pt]
  7&2&1&7&3&4&5&6\\
\hline\end{array}
\quad
\begin{array}{|c||c\hrow c\hrow c\hrow c\hrow c\hrow c\hrow c|}
\multicolumn{8}{c}{\mbox{$B$-side lists } \pi}\\\hline
  1&6&7&1&5&4&3&2\\[-3pt]
  2&7&1&2&6&5&4&3\\[-3pt]
  3&1&2&3&7&6&5&4\\[-3pt]
  4&2&3&4&1&7&6&5\\[-3pt]
  5&3&4&5&2&1&7&6\\[-3pt]
  6&4&5&6&3&2&1&7\\[-3pt]
  7&5&6&7&4&3&2&1\\
\hline\end{array}
\quad
\begin{array}{|c||c\hrow c\hrow c\hrow c\hrow c\hrow c\hrow c|}
  \multicolumn{8}{c}{\mbox{$A$-side lists } \tilde\pi}\\\hline
  1&1&2&3&4&5&6&7\\[-3pt]
  2&2&3&4&5&6&7&1\\[-3pt]
  3&3&4&5&6&7&1&2\\[-3pt]
  4&4&5&6&7&1&2&3\\[-3pt]
  5&5&6&7&1&2&3&4\\[-3pt]
  6&6&7&1&2&3&4&5\\[-3pt]
  7&7&1&2&3&4&5&6\\
\hline\end{array}
\quad
\begin{array}{|c||c\hrow c\hrow c\hrow c\hrow c\hrow c\hrow c|}
\multicolumn{8}{c}{\mbox{$B$-side lists } \tilde\pi}\\\hline
  1&1&7&6&5&4&3&2\\[-3pt]
  2&2&1&7&6&5&4&3\\[-3pt]
  3&3&2&1&7&6&5&4\\[-3pt]
  4&4&3&2&1&7&6&5\\[-3pt]
  5&5&4&3&2&1&7&6\\[-3pt]
  6&6&5&4&3&2&1&7\\[-3pt]
  7&7&6&5&4&3&2&1\\
\hline\end{array}
\]
\caption{\label{fig:example} Instances $\pi$ and $\tilde\pi$ demonstrating
tightness of Algorithm~\ref{alg.matchsimple}.}
\end{figure}

\subsection{One-sided evolution}

In this section we analyze how the Gale-Shapley algorithm performs when the
initial preference lists are random, but there is no evolution on the lists of
the elements in $A$. Furthermore, we are going to assume that the algorithm
knows each permutation in $\{\pi_x:x\in A\}$. We call this setting
\emph{one-sided evolution.}

\begin{algorithm}
	\begin{algorithmic}[1]
		\caption{Gale-Shapley Algorithm}\label{alg:GS}
		\State $M \leftarrow \phi$ \Comment{Initialize empty matching}
		\For{$x \in A$} 
 			\State \texttt{new\_match} $\leftarrow$ \textsc{False}
			\State $p \leftarrow x$
			\While{\texttt{new\_match} = \textsc{False}}
				\State $y \leftarrow$ \texttt{first as yet unproposed as per }$\pi_p$
				\If{$M(y)$ \texttt{not yet set}} 
					\State $M(p) \leftarrow y$
					\State $M(y) \leftarrow p$
					\State \texttt{new\_match} = \textsc{True}
					\ElsIf{$p \prec_{\pi_y} M(y)$} \Comment{$y$ prefers $p$ to $M(y)$}
					\State $p^\prime \leftarrow M(y)$
					\State $M(y) \leftarrow p$
					\State $M(p) \leftarrow y$
					\State $p \leftarrow p^\prime$
				\EndIf
			\EndWhile
		\EndFor  
		\State\Return $M$
	\end{algorithmic}
\end{algorithm}

In this setting, the standard Gale-Shapley algorithm is implemented (basic
pseudocode is shown in Algorithm~\ref{alg:GS}). Note that the only time the
preference lists on the $B$-side are used is in Line 11. Thus, it is only for
these steps that we need to query the input (since the preference lists on the
$A$-side are known to the algorithm). Thus, the number of queries made by the
algorithm is bounded by the number of proposals. It was already observed that
the number of proposals made in a random instance of the stable matching problem
is $O(n \log n)$. The actual algorithm keeps implementing the Gale-Shapley
algorithm from scratch after completion. The matching from the previous
completed run is used as the current matching. We prove the following result for
the one-sided evolution setting.

\begin{theorem}\label{thm.onesided}
	For a sufficiently large constant $c_0$ and any time-step $t \geq c_0 n \log
	n$, the Gale-Shapley algorithm (repeatedly run and using the matching of the
	last completed run as the output) under one-sided evolution maintains a
	matching with at most $O(\log n)$ blocking pairs in expectation and with
	high probability. \\
\end{theorem}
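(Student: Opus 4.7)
The plan is to decompose blocking pairs at time $t$ into (a) those already present at time $T$, the end of the most recent completed Gale--Shapley run, and (b) those introduced by evolution during $(T,t]$, following the same template as the proof of Theorem~\ref{thm.matchsimple}. Theorem~\ref{thm:proposalbound} implies that each run takes $O(n\log n)$ queries and hence $O(n\log n)$ time-steps, so part (b) is handled exactly as in Theorem~\ref{thm.matchsimple}: each evolution event creates a new blocking pair with probability at most $\alpha/(n-1)$, and Theorem~\ref{thm:extended-chernoff} turns this into $O(\log n)$ new blocking pairs in expectation and with high probability.

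The core of the argument is part (a). Since $\pi_x$ is fixed, the condition $y\prec_{\pi_x}M(x)$ holds if and only if $x$ proposed to $y$ during the run, so I only need to consider pairs $(x,y)$ with $x\in P_y$, where $P_y$ denotes the set of proposers received by $y$. By Wilson's deferred-decisions analysis each proposal is effectively uniform over $B$, and a balls-and-bins argument gives $|P_y|=O(\log n)$ for every $y$ with high probability. Simultaneously, the $O(n\log n)$ evolution events during the run are distributed uniformly over the $n$ permutations in $B$, so (as in the proof of Lemma~\ref{lem.logn}) each $\pi_y$ undergoes $O(\log n)$ adjacent swaps and the number of pairs that $\pi_y^s$ and $\pi_y^T$ rank differently is $O(\log n)$ for any intermediate $s$. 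For a blocking pair $(x,y)$ at time $T$ to exist, $y$'s trade history $m_0\to m_1\to\cdots\to m_k=M(y)$ must contain a comparison under some $\pi_y^{s_i}$ that disagrees with $\pi_y^T$; otherwise the chain is monotone in $\pi_y^T$ and $M(y)\prec_{\pi_y^T}x$ for every $x\in P_y\setminus\{M(y)\}$. Exploiting the symmetry of the uniform $\pi_x$'s, conditional on the evolution trajectory the queried pair in any such comparison behaves like a random pair in $A$, so the per-comparison disagreement probability is at most $O(\log n/n^2)$. Summing over the $O(n\log n)$ comparisons made by the algorithm and applying Theorem~\ref{thm:extended-chernoff} along the natural adaptive filtration then bounds the total number of disagreements, and hence the blocking pairs at time $T$, by $O(\log n)$ in expectation and with high probability.

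The main obstacle is the final estimate. A naive union bound over comparisons is too weak, so the plan is to set up a filtration in which the evolution trajectory of the $\pi_y$'s is revealed first and the $A$-side randomness is deferred, allowing Theorem~\ref{thm:extended-chernoff} to be applied cleanly to the sequence of adaptively-chosen comparisons. A secondary subtlety is the translation from ``wrong comparisons'' in $y$'s chain to blocking partners in $P_y$, since a single wrong comparison can in principle cascade into $y$ ending up with a match that is much worse in $\pi_y^T$ than optimal over $P_y$; controlling this cascade relies on the bound $|P_y|=O(\log n)$ together with the tightness of the per-comparison estimate.
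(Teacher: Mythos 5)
Your decomposition into (a) blocking pairs already present when the last Gale--Shapley run finished and (b) pairs created afterwards is fine, and your treatment of (b) coincides with the paper's. The gap is in part (a), and it is substantive. The paper never counts ``wrong comparisons'': it calls an evolution event on $\pi_y$ \emph{critical} if the swapped pair contains the \emph{then}-current match of $y$, observes that the relative order of two fixed elements changes only when they are swapped directly with each other, and deduces that every blocking partner $x$ of $y$ must at some point have been swapped with the then-current match. Hence the number of blocking pairs is at most the number of critical events, and criticality has conditional probability at most $2\alpha/(n-1)$ at \emph{every} step no matter the history, so Theorem~\ref{thm:extended-chernoff} applies verbatim over the $O(n\log n)$ steps of the run and yields $O(\log n)$ directly --- with no need for $|P_y|=O(\log n)$, no randomness of the $B$-side lists, and no cascade issue.

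Your route instead hinges on the per-comparison disagreement probability $O(\log n/n^2)$, and this step does not hold up. The compared pair on $\pi_y$ is (current proposer, current match), and the current match is by construction the $\pi_y$-best among the earlier proposers; whether a later swap can flip this comparison is governed by the positions, and in particular the adjacency, of these two elements inside $\pi_y$, which is correlated with exactly the history you condition on. The symmetry of the uniform $\pi_x$'s (the $A$-side lists) says nothing about positions inside $\pi_y$, so ``the queried pair behaves like a random pair in $A$'' is unjustified. More decisively, Theorem~\ref{thm:extended-chernoff} needs a conditional bound valid for \emph{all} histories, and conditioned on the compared pair being adjacent in $\pi_y$ the flip probability over the remaining $O(n\log n)$ steps is of order $\log n/n$, not $\log n/n^2$; also, whether a comparison is ``wrong'' is determined by \emph{future} evolution, so your $X_k$ are not adapted to the natural filtration at step $k$. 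Re-indexing by evolution events to fix adaptedness, the uniform conditional bound one actually gets is (number of previously compared pairs)$/\bigl(n(n-1)\bigr)=O(\log n/n)$ per event, which gives only $O((\log n)^2)$ wrong comparisons, and after the cascade factor $|P_y|=O(\log n)$ that you acknowledge but do not resolve, the bound degrades further --- short of the claimed $O(\log n)$. (Your auxiliary claim that $|P_y|=O(\log n)$ for every $y$ also needs an argument, since the stochastic domination in Theorem~\ref{thm:proposalbound} is stated for the total number of proposals, not per-receiver counts.) The missing idea is precisely the paper's: charge each blocking pair to a direct swap involving the then-current match, an event with a history-independent $O(1/n)$ conditional probability.
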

\begin{proof}
	To prove the bound on the number of blocking pairs, call an evolution event on
	$y$'s list \emph{critical} if it involves the \emph{then} match of $y$. Suppose
	that after the algorithm terminates, $y\in B$ is involved in $k$ blocking
	pairs $(x_1,y),\dots,(x_k,y)$.  We observe that each one of the
	$x_1,\dots,x_k$ was involved in at least one critical evolution step. To see
	this note that if $(x,y)$ is blocking, then $x$ proposed to $y$ during the
	execution of the algorithm and got rejected subsequently (because $\pi_x$
	didn't change and $y \prec_{\pi_x} M(x)$). But since at the end of the
	execution it forms a blocking pair, it must ranked higher than $M(y)$. This is
	only possible if $x$ was swapped with the \emph{then} match of $y$ in
	some evolution event during the execution of the algorithm.

	Given this observation, we estimate the number of blocking pairs by estimating
	the number of critical evolution steps.  Note that an evolution step is
	critical with probability at most $ 2 \alpha/(n-1)$ (at most $2n$ out of
	$n(n-1)$ pairs involve the matching).  Let $T$ be a random variable equal to
	the number of proposals before the algorithm outputs a matching and label the
	corresponding time-steps as $1,2,\dots,T$.  For each step $k$, let $X_k$ be a
	Boolean random variable that is equal to 1 if at the time-step labeled $k$
	some evolution event was critical.

  First note that from a coupon-collecting argument as in
  Theorem~\ref{thm:proposalbound}
  it follows that $T=O(n\log n)$ in expectation and with high probability.
  This is because for that argument the distribution of $\{\pi_y:y\in B\}$
  is irrelevant and $\{\pi_x:x\in A\}$ being random permutations suffices.
  Therefore, we may fix an appropriately large constant $C$ so that $T\le Cn\log n$
  with high probability and let $m=Cn\log n$. 
  As noted above, at any given step and given any information from the
  previous steps, an evolution step is critical with probability at most
  $2\alpha/(n-1)$; thus, by Theorem~\ref{thm:extended-chernoff},
  \[\pr\Bigl[\sum_{k=1}^mX_k>2C\log n\Bigr]=O(n^{-C}),\]
  for sufficiently large $C$. We have
  \begin{align*}
    \pr\Bigl[\sum_{k=1}^TX_k>2C\log n\Bigr]
			&\le\pr\bigl[T>Cn\log n\bigr]
				+\pr\Bigl[\Bigl(\sum_{k=1}^TX_k>2C\log n\Bigl)\wedge\bigl(T\le Cn\log n\bigr)\Bigr]\\
     &\le\pr\bigl[T>Cn\log n\bigr]+\pr\Bigl[\sum_{k=1}^mX_k>2C\log n\Big].
  \end{align*}
  By the observation at the beginning of this paragraph the claimed bound
  holds with high probability.
  The bound on the expectation follows by noting that there can be
  at most $n^2$ blocking pairs.

  Finally, note that there will be at most $O(n\log n)$ time-steps before a
  new matching is computed. As in the final paragraph in the proof of
  Theorem~\ref{thm.matchsimple}, one can show that evolution cannot produce
  more than $O(\log n)$ blocking pairs in these many steps.
\end{proof}

\section{General Case: Improved algorithm} 

We now consider the general setting where the preference lists on both sides
may be evolving. We present a modified version of the Gale-Shapley algorithm
that takes advantage of Lemma~\ref{lem.logn} (Lemma~6 of \cite{akmu2011}) and
maintains a stable matching with at most $O((\log n)^2)$ blocking pairs.  The
algorithm consists of two separate processes that run in an interleaved
fashion: the \emph{sorting process} on even time-steps and the \emph{matching
process} on odd ones. The sorting process is basically a call to
$\textsc{SequentialSort}(\{\pi_x ~|~ x \in A\})$ that produces
\emph{approximately sorted} preference lists on the $A$ side,
$\{\tilde{\pi}_x~|~ x \in A\}$. The algorithm runs in perpetuity, in the sense
that as soon as it terminates it restarts, though the copies $\{\tilde\pi_x ~|~
x \in A\}$ from the previous execution are retained to be used by the stable
matching process. Initially, the $\tilde\pi_x$ are set to be random
permutations, thus, for the first $O(n^2 \log n)$ steps, until one run of the
sorting process is complete, the matching output by the algorithm will be
garbage. 

\begin{algorithm}
  \caption{\label{alg.match}:~~Interleaving Sorting and Matching \strut}
  \begin{algorithmic}[1]
    \For{$t=1,2,\dots$}
		\If{$t$ is \textsc{Even}} 
      	\State Perform query for Algorithm~\ref{alg.sequential}
	  	\ElsIf{$t$ is \textsc{Odd}} 
        	\State Perform query for Algorithm~\ref{alg.dyngs}
      \EndIf 
    \EndFor
  \end{algorithmic}
\end{algorithm}

We note that what is counted here is only time-steps; making queries
is the bottleneck. Additional computations required by the algorithm can be
performed in between time-steps at no additional cost.

The sorting process performs queries only during even steps and its purpose it
to keep the preference list of each $x\in A$ approximately sorted, where by
approximately sorted we meant that the conclusion of Lemma~\ref{lem.logn}
holds. 

The matching process performs queries during odd steps. Our stable matching
algorithm, which is a variant of the Gale-Shapley algorithm, is presented in
Algorithm~\ref{alg.dyngs}. Note that the $\{\tilde\pi_x ~|~ x \in A\}$ used in
Algorithm~\ref{alg.dyngs} are \emph{static} and are the output of the latest
completed run of Algorithm~\ref{alg.sequential}. However, the comparisons made
are all on dynamic data. The difference from the standard Gale-Shapley
algorithm is that whenever some $x \in A$ is about to make a proposal, first
the \emph{best} $y \in B$ among the $O(\log n)$ highest ranked as per the
ranking $\tilde\pi_x$ that have not yet rejected $x$ is found.  Note however,
that the best is with respect to the dynamic (current) preference list $\pi_x$
(otherwise, it would be trivial since $\tilde\pi_x$ is static).  This operation
is basically the algorithm to find the minimum element, which can be
implemented in $O(\log n)$ time using only comparison queries (see Section 3
in \cite{akmu2011}). We don't need to use any particular result regarding
finding the minimum element; instead, we incorporate the errors that may have
occurred while finding the minimum due to the dynamic nature of the data, into
our stable matching analysis directly.

\begin{algorithm}
  \caption{\label{alg.dyngs}:~~Modified Gale-Shapley Algorithm\strut}
  \begin{algorithmic}[1]
    \State $M \leftarrow \emptyset$
    \For{$x \in A$}
			\State $\texttt{new\_match} \leftarrow \textsc{False}$
			\State $p \leftarrow x$
			\While{$\texttt{new\_match} = \textsc{False}$}
				\State {$S \leftarrow C\log n$ \texttt{highest-ranked, not-yet-proposed
				elements in} $B$ \texttt{per} $\tilde\pi_x$}
				\State $y \leftarrow \texttt{best}(S)$ \Comment{Best with respect to
				dynamic $\pi_x$}
				\If{$M(y)$ \texttt{not yet set}}
					\State $M(p) \leftarrow y$
					\State $M(y) \leftarrow p$
					\State $\texttt{new\_match} \leftarrow \textsc{True}$
				\ElsIf{$p \prec_{\pi_y} M(y)$}
					\State $p^\prime \leftarrow M(y)$
					\State $M(y) \leftarrow p$
					\State $M(p) \leftarrow y$
					\State $p \leftarrow p^\prime$
				\EndIf
			\EndWhile
    \EndFor
    \State\Return $M$
  \end{algorithmic}
\end{algorithm}

We first describe the high-level idea of the proof. The sorting process needs
$O(n^2\log n)$ comparisons with high probability.  The approximations
$\{\tpi_x:x\in A\}$ of $\{\pi_x:x\in A\}$ that are being computed are used by
the modified Gale-Shapley algorithm. By Lemma~\ref{lem.logn} we are able
to claim that for any element $u$ in the preference list of any $x\in A$, the
number of pairs $(u,v)$ that are ordered differently in $\tpi_x$ and $\pi_x$ are
$O(\log n)$. Therefore, when $x$ is about to propose it suffices to look among
$O(\log n)$ elements in $\tpi_x$ to find the $y$ to which the proposal will be
made. Since the matching process is expected to make $O(n\log n)$ proposals, it
is expected to require $O(n(\log n)^2)$ comparisons. It turns out that, during
these steps, evolution creates a blocking pair with probability at most
$\alpha/n$. Therefore we expect $O((\log n)^2)$ blocking pairs.

We now provide the details of the proof. In order to bound the number of
blocking pairs, it is crucial that during the matching process not too many
queries are made, or alternatively that not too many proposals are made. We
therefore need an analog of Theorem~\ref{thm:proposalbound}. To apply the
coupon-collecting argument from the proof of that theorem we prove the following
lemma. 

\begin{lemma}
	Provided $\pi_x$ was chosen uniformly at random at time-step $0$ and only
	comparison queries are made, the element $y$ chosen at line 7 of
	Algorithm~\ref{alg.dyngs} is a random element from the subset of $B$ to
	which $x$ has not by that point made any proposals.
\end{lemma}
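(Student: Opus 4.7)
The plan is to extend the deferred-decisions argument from the proof sketch of Theorem~\ref{thm:proposalbound} to the evolving setting, exploiting that $\pi_x^0$ is uniform random and that the algorithm touches $\pi_x$ only through comparison queries---both in the sorting phase producing $\tilde\pi_x$ and in the $\texttt{best}(S)$ subroutine at line~7. The relevant symmetry group is $\mathrm{Sym}(B)$, acting on the underlying randomness $\omega$ by $\tau\cdot\pi_x^0=\pi_x^0\circ\tau^{-1}$ while leaving the rank-indexed evolution events on $\pi_x$ untouched; since $\pi_x^0$ is uniform and evolution swaps adjacent ranks (a rank-side operation that commutes with any relabeling of $B$), this action preserves the joint law of the $\pi_x$-randomness.

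The core technical step is a covariance claim, proved by induction on the algorithm's query sequence: on input $\tau\cdot\omega$ every comparison query ``is $\pi_x^t(u)<\pi_x^t(v)$?'' issued on $\omega$ becomes ``is $\pi_x^t\circ\tau^{-1}(\tau(u))<\pi_x^t\circ\tau^{-1}(\tau(v))$?'' and has the same truth value, so the algorithm's view along $\tau\cdot\omega$ is the $\tau$-image of its view along $\omega$. Consequently every object derived from $\pi_x$-queries transforms covariantly under $\tau$: the approximate sort $\tilde\pi_x$, the current not-yet-proposed set (call it $N$), the set $S$ selected in line~6, and the element $y=\texttt{best}(S)$ chosen in line~7 all get relabeled by $\tau$.

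Now fix any realization $N^*\subseteq B$ of the not-yet-proposed set and restrict attention to $\tau\in\mathrm{Sym}(B)$ that fix $N^*$ setwise. Such $\tau$ preserve the event $\{N=N^*\}$, and on that event $\omega\mapsto\tau\cdot\omega$ is a measure-preserving bijection that sends $y$ to $\tau(y)$. Letting $\tau$ range over $\mathrm{Sym}(N^*)$ forces the conditional distribution of $y$ given $\{N=N^*\}$ to be invariant under every permutation of $N^*$, hence uniform on $N^*$, which is the statement of the lemma.

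The main technical obstacle is carrying out the covariance induction in the presence of the algorithm's queries on the lists $\pi_y$ (used in the accept/reject test $p\prec_{\pi_y}M(y)$), since these also influence the control flow and may depend on which $y$ is being proposed to. A clean way to close this gap is to extend the relabeling to act simultaneously on $\{\pi_y\}_{y\in B}$ (harmless when these are exchangeable), or alternatively to first condition on the accept/reject history so that $x$'s proposal sequence becomes a pure function of $\pi_x$; either way one recovers the label symmetry needed to conclude.
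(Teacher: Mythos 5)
Your proposal is correct and takes essentially the same route as the paper's own (very brief) proof: both arguments rest on the facts that the sorting and \texttt{best} procedures touch $\pi_x$ only through comparison queries, that evolution events act on ranks, and that a uniformly random initial $\pi_x$ therefore makes the selected element uniform over the not-yet-proposed set---you phrase this as equivariance under relabelings of $B$, while the paper phrases it as ``given the same evolution steps, the selected rank is the same for every initial permutation.'' If anything, your write-up is the more careful one, since you explicitly flag (and sketch how to close) the conditioning issue arising from the $B$-side accept/reject queries, which the paper's four-line proof passes over in silence.
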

\begin{proof}
	The proof of the lemma relies on the fact that the dynamic quicksort
	algorithm used to obtain $\tilde\pi_x$ for $x \in A$ and the procedure used
	to find the best element in line 7 of Algorithm~\ref{alg.dyngs} only use
	comparison queries.

	Let $\pi_x$ be the preference list of $x$ before the first comparison is
	queried. Fix an arbitrary sequence of evolution steps that occurs during the
	computation of $y$. Suppose that given these choices of nature,
	$y=\pi_x(k)$. Then, given the same evolution steps, for any other
	permutation $\pi'_x$, $y=\pi'_x(k)$.  Since $\pi_x$ is a random permutation,
	$y=\pi_x(k)$ is a random element of $\sigma_x$.
\end{proof}

\begin{remark}
	We remark that the requirement on the implementation of \dynamicqs\ and
	\texttt{best} using only comparison queries is necessary and the lemma does
	not hold for an arbitrary algorithm.
\end{remark}

\begin{lemma}\label{lem.proposals}
  The number of proposals during one execution of the matching process is
  $O(n\log n)$ in expectation and with high probability.
\end{lemma}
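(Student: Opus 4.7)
The plan is to reduce to Wilson's coupon-collector argument from Theorem~\ref{thm:proposalbound}. The preceding lemma already delivers the key ingredient: assuming $\pi_x$ is uniformly distributed at the start of the matching execution (which is preserved by the random-swap evolution, since the uniform distribution is stationary for that Markov chain), at every invocation of line~7 of Algorithm~\ref{alg.dyngs} the element $y$ selected by $x$ is distributed uniformly over the subset of $B$ that $x$ has not yet proposed to during this run. This is exactly the proposal distribution that the classical Gale--Shapley algorithm produces when $\pi_x$ is a uniformly random permutation, exposed via the principle of deferred decisions.

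Mirroring the proof sketch of Theorem~\ref{thm:proposalbound}, I would couple the matching process to an alternative procedure in which, whenever $x$ is about to propose, it instead samples $y \in B$ uniformly at random with repetition; if this $y$ has previously rejected $x$, it rejects again, otherwise the dynamic preference $\pi_y$ is used at line~11 exactly as in Algorithm~\ref{alg.dyngs}. Since our algorithm never wastes proposals on previously-proposed targets whereas the alternative does, the number of proposals in the alternative stochastically dominates the number in our algorithm.

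To finish, I would observe that in the alternative procedure, termination is guaranteed once every $y \in B$ has been proposed to at least once: any $y$ that has received a proposal remains matched for the rest of the execution (possibly to a different partner later), so once all $n$ elements of $B$ are matched, by bijection every $x \in A$ is matched as well. The number of uniformly random samples from $B$ needed to cover all $n$ elements is precisely the coupon-collector quantity, which is $O(n \log n)$ in expectation and with high probability. Stochastic domination then transfers this bound to the actual matching process.

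The main obstacle I anticipate is making sure the coupling is clean in the presence of the evolving $B$-side lists. The crucial observation is that $B$-side evolution only influences \emph{which} partner each $y$ ends up matched to (via line~11), not \emph{which} $y$ gets proposed to next, and certainly not the termination criterion ``every $y$ has received at least one proposal''. With that in hand, the coupon-collector bound goes through essentially verbatim from Wilson's argument.
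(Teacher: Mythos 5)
Your proposal is correct and follows essentially the same route as the paper: it invokes the uniformity lemma for the element chosen at line~7 and then reduces to the coupon-collector argument of Theorem~\ref{thm:proposalbound} via stochastic domination. The only difference is that you spell out the coupling to the ``uniform proposals over all of $B$'' procedure explicitly, whereas the paper delegates that step to the proof sketch of Theorem~\ref{thm:proposalbound}.
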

\begin{proof}
  Suppose $x$ proposes to $y$ and at that point $k$ elements of $B$ are
  unmatched. Note that it must be the case that $x$ has not proposed to any of
  these elements (otherwise, they would not be currently unmatched). Thus, by
  the previous lemma, each of these $k$ elements receives a proposal with
  probability at least $1/n$. The stated bound follows from the analysis of
  coupon collector's problem as in Theorem~\ref{thm:proposalbound}.
\end{proof}

As in the one-sided setting, the analysis will rely on estimating the
occurrence of a specific kind of critical evolution steps. In the present
case the definition of a critical evolution step is a little more involved
than its one-sided counterpart.

\begin{definition}
	\label{defn:critical}
	An evolution event on the preference list $\pi_z$ of $z \in A\cup B$ is
	\emph{critical} if one of the following holds:
  	\begin{enumerate}
		\item An evolution event involves a swap of the \emph{then} match of $z$, $M(z)$.
		\item If $z \in A$, the evolution event involves swapping the
			\emph{then} best element as per $\pi_z$ to which $z$ has not yet
			proposed. 
	\end{enumerate}
\end{definition}
	  
The following claim establishes the link between the critical evolution steps
and the number of blocking pairs.

\begin{claim}
	\label{claim:critical}
	Assume that for the duration of one run of the matching process, the
	preference lists $\{\pi_z ~|~ z \in A \cup B \}$ satisfy the conditions of
	Lemma~\ref{lem.logn}, and suppose that $(x, y)$ is a blocking pair with
	respect to the returned matching. Then there was a critical evolution event
	on the preference list of at least one of $x$ and $y$ during the execution
	of the matching process.
\end{claim}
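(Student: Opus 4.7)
The plan is to prove the claim by contraposition: suppose that no critical evolution event occurred on either $\pi_x$ or $\pi_y$ throughout the matching process, and derive that $(x,y)$ cannot be blocking. The argument rests on two consequences of this assumption. First, on $\pi_y$ the absence of type 1 events means that during any interval in which $y$'s match is some fixed $z$, no adjacent swap in $\pi_y$ involves $z$; hence the relative rank of every other element with respect to $z$ is preserved while $z = M(y)$, because two elements can only switch relative order via an adjacent swap that involves both. Second, on $\pi_x$ the absence of type 1 and type 2 events, combined with the hypothesis that Lemma~\ref{lem.logn} applies to $\tilde\pi_x$, guarantees that at every proposal step $x$ proposes to the true best element of $\pi_x$ among those not yet proposed to: Lemma~\ref{lem.logn} places that true best inside $S$ once $C$ is chosen larger than the $O(\log n)$ bound, and the absence of type 2 events ensures that the best among the not-yet-proposed set does not move during the $\texttt{best}(S)$ subroutine, so its comparisons on dynamic $\pi_x$ identify the correct element.

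I will then split into two cases based on whether $x$ ever proposed to $y$. In Case A, $x$ proposed to $y$ at some time $s$; since $M(x) \neq y$ at the end, $x$ was either rejected by $y$ at $s$ or accepted and later displaced. Let $s'$ be the first time from $s$ onward at which $y$'s current match $z_0$ satisfies $z_0 \prec_{\pi_y} x$, so the base case $z_0 \prec_{\pi_y^{s'}} x$ holds. I will then track $y$'s match sequence $z_0, z_1, \dots, z_k = M(y)$ and prove by induction that $z_i \prec_{\pi_y} x$ at every transition: during $z_i$'s tenure the relative rank of $z_i$ and $x$ is preserved by the first observation, and each transition $z_i \to z_{i+1}$ happens because $z_{i+1} \prec_{\pi_y} z_i$ at that moment, so $z_{i+1} \prec_{\pi_y} x$ follows. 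At the end, $M(y) = z_k \prec_{\pi_y} x$, contradicting the blocking condition. In Case B, $x$ never proposed to $y$. Let $t_x$ be the time of $x$'s last proposal, which must target $y_{\text{last}} = M(x)$ at the end (else $x$ would propose again). By the second observation, $y_{\text{last}}$ is the true best of $\pi_x^{t_x}$ among not-yet-proposed elements, and $y$ lies in that set, so $y_{\text{last}} \prec_{\pi_x^{t_x}} y$. For the blocking condition $y \prec_{\pi_x} y_{\text{last}}$ to hold at the end, the relative order of $y_{\text{last}}$ and $y$ in $\pi_x$ must flip, which requires an adjacent swap directly exchanging them; since $y_{\text{last}} = M(x)$ throughout $[t_x, \text{end}]$, such a swap is a type 1 critical event on $\pi_x$, contradicting the assumption.

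The main obstacle I anticipate is the careful treatment of the $\texttt{best}(S)$ subroutine at line~7 of Algorithm~\ref{alg.dyngs}: because it issues several comparison queries on a preference list that is itself evolving, I will need to argue that under the no-critical-event assumption its output is the true best element of $\pi_x$ among those not yet proposed to. Type 2 of Definition~\ref{defn:critical} is tailored exactly for this, but converting its definition into the clean invariant used above --- immutability of the then best per $\pi_x$ among not-yet-proposed elements between proposals whenever no type 2 event fires --- requires a short case analysis on adjacent swaps that I would carry out separately before invoking it in the main argument.
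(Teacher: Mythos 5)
Your proposal is correct and takes essentially the same route as the paper: a case split on whether $x$ ever proposed to $y$, using the absence of type-1 critical events on $\pi_y$ to track $y$'s successive matches in the first case, and the Lemma~\ref{lem.logn} approximation (so the true best lies in $S$) together with type-2 and then type-1 events on $\pi_x$ in the second. Your contrapositive framing, the explicit induction over $y$'s match sequence, and the invariant for $\texttt{best}(S)$ are just more detailed renderings of the paper's direct argument.
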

\begin{proof}
  First consider the case that $x$ proposed to $y$ during the execution of the
  matching process. It follows that $y$ rejected $x$ at some point in favor of
  some other element. When $x$ was rejected, the \emph{then} $M(y)$
  satisfies $M(y) \prec_{\pi_y} x$. Subsequently, $M(y)$ may change but has to
  become better, unless there was a swap that involves the \emph{then} $M(y)$,
  which is a critical event on $\pi_y$. Since, we know that in the end $x
  \prec_{\pi_y} x^\prime$, where $x^\prime$ is the final match of $y$, there
  must have been some evolution event where $x$ was swapped with the
  \emph{then} match of $y$. Thus, by part 1 of Definition~\ref{defn:critical},
  a critical evolution event occurred on $\pi_y$.

	On the other hand, suppose $x$ never proposed to $y$, and let $y^\prime$ be
	the final match of $x$.  Suppose that when $x$ proposed to $y'$, $y
	\prec_{\pi_x} y'$; it follows that $\texttt{best}$ on line~7 of
	Algorithm~\ref{alg.dyngs} failed to return the best element to which $x$ had
	not yet proposed. Since, we are assuming that $\tilde\pi_x$ is a sufficient
	approximation of $\pi_x$, it must be because the actual \emph{best} element
	was swapped at least once while $\texttt{best}$ was being executed. Thus, by
	part 2 of Definition~\ref{defn:critical}, a critical evolution event
	occurred on $\pi_x$. Finally, if when $x$ proposed to $y^\prime$, it was the
	case that $y^\prime \prec_{\pi_x} y$, but $(x, y)$ is a blocking pair, it
	must be that $y^\prime$ was involved in a swap subsequently leading to a
	critical event involving $x$'s the \emph{then} match.
\end{proof}

\begin{theorem} \label{thm.improved}
	Provided the initial preference lists are drawn randomly,\footnote{This is
	not actually required, since after sufficiently long (though still
	polynomial) time, all the preference lists will be close to random due to a
	mixing time argument on the set of permutations.} for all $z\in A\cup B$, for a sufficiently large constant $c_0$ and any time-step $t \geq c_0 n^2 \log n$, 
	Algorithm~\ref{alg.dyngs} maintains a matching with at most $O((\log n)^2)$
	blocking pairs in expectation and with high probability. \\
\end{theorem}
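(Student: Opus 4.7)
The plan is to adapt the one-sided argument of Theorem~\ref{thm.onesided} using the richer critical-event notion of Definition~\ref{defn:critical} and the accounting provided by Claim~\ref{claim:critical}. Fix a time-step $t\ge c_0 n^2\log n$, let $T\le t$ be the step at which the currently returned matching $M$ completed, and let $T_0\le T$ be the step at which that run of Algorithm~\ref{alg.dyngs} started. Since the sorting process takes $O(n^2\log n)$ time-steps whp (Lemma~\ref{lem.quicksort} summed over the $n$ permutations of the $A$-side) and $c_0$ is a large constant, at least one full run of \textsc{SequentialSort} completes before $T_0$, so the lists $\tilde\pi_x$ used by this matching run satisfy Lemma~\ref{lem.logn} at time $T_0$. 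A balls-and-bins estimate identical to the one inside the proof of Lemma~\ref{lem.logn} then shows that the property is preserved up to constants throughout $[T_0,T]$, because only $O(n(\log n)^2)$ further evolution events happen in that window, which is much fewer than the $O(n^2\log n)$ tolerated by the original argument; in particular the hypotheses of Claim~\ref{claim:critical} hold for the whole matching run.

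Next I would control the length $T-T_0$. Lemma~\ref{lem.proposals} says the matching run makes $O(n\log n)$ proposals whp; each call to $\texttt{best}(S)$ on line~7 of Algorithm~\ref{alg.dyngs} uses $O(\log n)$ comparison queries on a set of size $C\log n$; and the matching process acts only on odd time-steps; hence $T-T_0=O(n(\log n)^2)$ whp and at most $O(\alpha n(\log n)^2)=O(n(\log n)^2)$ evolution events occur in that window.

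The core of the argument is the critical-event count. By Claim~\ref{claim:critical} every blocking pair present at time $T$ can be charged to a critical evolution event in $[T_0,T]$ on the list of one of its endpoints, and inspection of Definition~\ref{defn:critical} shows that each critical event can be the witness of at most $O(1)$ blocking pairs. An evolution event picks $z\in A\cup B$ and an adjacent swap position $i\in[n-1]$ uniformly; only $O(1)$ values of $i$ involve the current match of $z$, and, for $z\in A$, only $O(1)$ values involve the current best-unproposed element, so the conditional probability that a given event is critical is $O(1/n)$ regardless of prior history. Applying Theorem~\ref{thm:extended-chernoff} with $p=O(1/n)$ and window length $m=O(n(\log n)^2)$, after first conditioning on the high-probability event $T-T_0\le m$ exactly as in Theorem~\ref{thm.onesided}, yields $O((\log n)^2)$ critical events whp, and hence $O((\log n)^2)$ blocking pairs at time $T$ in expectation and whp.

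Finally I would add the blocking pairs that evolution may introduce in the idle interval $[T,t]$, whose length is again $O(n(\log n)^2)$ (otherwise the next matching would already have been produced). Each evolution step creates a new blocking pair with probability $O(1/n)$ by the argument in the last paragraph of the proof of Theorem~\ref{thm.matchsimple}, so a further application of Theorem~\ref{thm:extended-chernoff} gives $O((\log n)^2)$ additional blocking pairs whp; combining the two contributions and turning the whp bound into an expectation via the trivial $n^2$ upper bound yields the claim. The main delicate step is the dependent-variable Chernoff: whether the $k$-th evolution event is critical is a function of the running matching $M$ and of the current best-unproposed element, both of which themselves depend on the entire random history of proposals and swaps, and it is precisely the history-independent $O(1/n)$ bound on the conditional criticality probability that lets Theorem~\ref{thm:extended-chernoff} apply directly.
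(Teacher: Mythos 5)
Your proposal is correct and follows essentially the paper's own argument: charge blocking pairs to critical evolution events via Claim~\ref{claim:critical}, bound the matching run's length by $O(n(\log n)^2)$ via Lemma~\ref{lem.proposals} (an extra $O(\log n)$ queries per proposal), use the history-independent $O(\alpha/n)$ criticality bound together with Theorem~\ref{thm:extended-chernoff} after conditioning on $T\le m$ and on the lists being approximately sorted, and obtain the expectation bound from the trivial $n^2$ cap. Your explicit treatment of the interval between completion of the matching run and the query time $t$, and of the persistence of the Lemma~\ref{lem.logn} guarantee throughout the run, elaborates details the paper handles only implicitly (by analogy with the closing steps of Theorems~\ref{thm.matchsimple} and~\ref{thm.onesided}), so no gap.
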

\begin{proof}
	As a result of Claim~\ref{claim:critical}, we can estimate the number of
	blocking pairs by estimating the number of critical evolution steps.  Let
	$T$ be a random variable equal to the number of queries of Algorithm~\ref{alg.dyngs}
	before it outputs a matching and label the corresponding 
	time-steps as $1,2,\dots,T$.  For each step $k$, let $X_k$ be a Boolean
	random variable that is equal to 1 if at the time-step labeled $k$ the
	evolution was critical.  Furthermore, denote by ${\mathcal E}$ the event
	that during these $T$ time-steps the lists were approximately sorted. By
	Lemma~\ref{lem.logn}, the event ${\mathcal E}$ occurs with high probability.

	By Lemma~\ref{lem.proposals} it follows that $T=O(n(\log n)^2)$ in expectation
  and with high probability, since we are wasting $O(\log n)$ queries per
  proposal.  Therefore, we may fix an appropriately large constant $C$ so that
  $T\le Cn(\log n)^2$ with high probability and let $m=Cn(\log n)^2$.  Note
  that---given any history of evolution steps---an evolution step is critical
  with probability at most $O(\alpha/n)$, since for each $z\in A\cup B$ there
  is a constant number of elements that evolution has to swap in order to be
  critical.
  Thus, by Theorem~\ref{thm:extended-chernoff},
  \[\pr\Bigl[\sum_{k=1}^mX_k>2C(\log n)^2\Bigr]=O(n^{-C}),\]
  for sufficiently large $C$. We have
  \begin{align*}
    \pr\Bigl[\sum_{k=1}^TX_k>2C(\log n)^2\Bigr]
      &\le\pr\bigl[T>Cn(\log n)^2\bigr]+\pr[\bar{\mathcal E}]\\
      &\qquad+\pr\Bigl[\Bigl(\sum_{k=1}^TX_k>2C(\log n)^2\Bigl)\wedge
        \bigl(T\le Cn(\log n)^2\bigr)\wedge{\mathcal E}\Bigr]\\
     &\le\pr\bigl[T>Cn(\log n)^2\bigr]+\pr[\bar{\mathcal E}]
      +\pr\Bigl[\sum_{k=1}^mX_k>2C(\log n)^2\Big].
  \end{align*}
  It follows that the claimed bound
  holds with high probability.
  The bound on the expectation follows by noting that there can be
  at most $n^2$ blocking pairs.
\end{proof}

\par

\section*{Acknowledgments}

For earlier discussions, F.M.\ would like to thank Marcos Kiwi who, among other
things, introduced him to the line of works on evolving data and shared
preliminary thoughts on the possibility of computing stable matchings in this
context. We also thank a reviewer for (minor) corrections in our main theorem.

\bibliographystyle{alpha}
\bibliography{esm}	

\newcommand{\etalchar}[1]{$^{#1}$}
\begin{thebibliography}{AKM{\etalchar{+}}12}

\bibitem[AKM{\etalchar{+}}12]{AKMUV12}
A.~Anagnostopoulos, R.~Kumar, M.~Mahdian, E.~Upfal, and F.~Vandin.
\newblock Algorithms on evolving graphs.
\newblock In {\em Proc. of 3rd Innovations in Theoretical Computer Science},
  2012.

\bibitem[AKMU11]{akmu2011}
A.~Anagnostopoulos, R.~Kumar, M.~Mahdian, and E.~Upfal.
\newblock Sorting and selection on dynamic data.
\newblock {\em Theoretical Computer Science}, 412(24):2564 -- 2576, 2011.
\newblock Selected Papers from 36th International Colloquium on Automata,
  Languages and Programming.

\bibitem[EGI99]{EGI:1999}
D.~Eppstein, Z.~Galil, and G.~F. Italiano.
\newblock Dynamic graph algorithms.
\newblock In M.~Atallah, editor, {\em Algorithms and Theory of Computation
  Handbook}, chapter~8. CRC Press, 1999.

\bibitem[GP13]{GP:2013}
M.~Gupta and R.~Peng.
\newblock Fully dynamic (1+ e)-approximate matchings.
\newblock In {\em Proc. of 54th IEEE Foundations of Computer Science}, pages
  548--557, Oct 2013.

\bibitem[GS62]{GS:1962}
D.~Gale and L.~S. Shapley.
\newblock College admissions and the stability of marriage.
\newblock {\em The American Mathematical Monthly}, 69(1):9--15, 1962.

\bibitem[Knu97]{knuth1997}
D.~Knuth.
\newblock {\em Stable Marriage and Its Relation to Other Combinatorial
  Problems: An Introduction to the Mathematical Analysis of Algorithms}.
\newblock CRM proceedings \& lecture notes. American Mathematical Society,
  1997.

\bibitem[LPW09]{LPW:2009}
David~Asher Levin, Yuval Peres, and Elizabeth~Lee Wilmer.
\newblock {\em Markov Chains and Mixing Times}.
\newblock American Mathematical Society, 2009.

\bibitem[McD98]{mcdiarmid1998}
C.~McDiarmid.
\newblock Concentration.
\newblock In M.~Habib, C.~McDiarmid, J.~Ramirez-Alfonsin, and B.~Reed, editors,
  {\em Probabilistic Methods for Algorithmic Discrete Mathematics}, volume~16
  of {\em Algorithms and Combinatorics}, pages 195--248. Springer Berlin
  Heidelberg, 1998.

\bibitem[MR95]{mr1995}
R.~Motwani and P.~Raghavan.
\newblock {\em Randomized Algorithms}.
\newblock Cambridge International Series on Parallel Computation. Cambridge
  University Press, 1995.

\bibitem[NS13]{NS:2013}
O.~Neiman and S.~Solomon.
\newblock Simple deterministic algorithms for fully dynamic maximal matching.
\newblock In {\em Proc. of 45th ACM Symposium on Theory of Computing}, pages
  745--754, 2013.

\bibitem[OR10]{OR:2010}
K.~Onak and R.~Rubinfeld.
\newblock Maintaining a large matching and a small vertex cover.
\newblock In {\em Proc. of 42nd ACM Symposium on Theory of Computing}, pages
  457--464, 2010.

\bibitem[Wil72]{Wilson:1972}
L.~B. Wilson.
\newblock An analysis of the stable marriage assignment algorithm.
\newblock {\em BIT Numerical Mathematics}, 12(4):569--575, 1972.

\end{thebibliography}
\end{document}